\newtheorem{theorem}{Theorem}
\newtheorem{lemma}{Lemma}
\title{A generalization of tangent-based implicit curves}
\author[Á. Sipos]
       {Ágoston Sipos
        \\
        Department of Control Engineering and Information Technology, Budapest University of Technology and Economics
       }
\begin{document}

\maketitle

\begin{abstract}

An approach to defining quadratic implicit curves is to prescribe two tangent lines and a secant line going through the points of tangency.\cite{Liming:1947} This paper will show that this method can be generalized to a higher number of tangents, resulting in higher degree curves.

\end{abstract}

\section{Introduction}

Implicit curves are widely used in approximation and design. Their advantages include: (1) they can very efficiently be intersected with parametric curves; (2) it is easy to classify points based on which side of the curve they are; (3) an offset curve can be simply defined by adding a constant to the equation. However, they also have disadvantages like: (1) generating points along the curve is hard; and (2) implicit curves may - and often do - have multiple unintended branches.

It is not necessarily intuitive to design with implicit curves. Generally, they are a mathematical formula in the form $f(x,y) = 0$, but the function $f$ usually does not have a natural geometric interpretation (except in the case of simple objects like lines and circles).

For quadratic curves, Liming\cite{Liming:1947} proposed a method for representing the curve with tangent and secant lines. This means not having to work with the quadratic function's geometrically meaningless coefficients, instead, the implicit curve can be evaluated at any point by evaluating those lines and substituting them into a fixed formula.

This paper is going to present a way to generalize this approach to having a higher number of prescribed tangent lines defining the curve, which subsequently will be of a higher degree.

\section{Previous work}

Liming's method for conic (quadratic) curves\cite{Liming:1947} works as follows (see also Figure \ref{fig:conic}):

\begin{enumerate}
    \item Let $L_1, L_2 : \mathbb{R}^2 \rightarrow \mathbb{R}$ be two lines in implicit form, that are two (distinct) tangents to the desired curve
    \item Let $C : \mathbb{R}^2 \rightarrow \mathbb{R}$ be a secant line to the curve, going through the two points of tangency
    \item Then, $\forall \lambda \in (0;1) : $ the curve \begin{equation} Q(x,y) = (1-\lambda) \cdot L_1(x,y) \cdot L_2(x,y) - \lambda \cdot C^2(x,y) \end{equation} is a nonsingular quadratic curve, fulfilling the tangential conditions.
\end{enumerate}

\begin{figure}
    \centering
    \includegraphics[width=\linewidth]{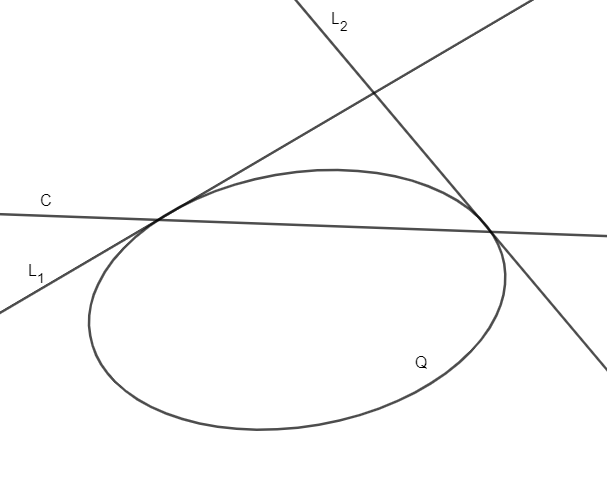}
    \caption{Example of Liming's method}
    \label{fig:conic}
\end{figure}

There are multiple similar schemes for 3D surfaces, like the functional spline\cite{Li:1990} which is a direct generalization of Liming's method for an arbitrary number of tangential surfaces. It is however not in all situations clear how to describe the \emph{transversal surface} which is the analogy of the secant line in Liming's.

Another 3D method is the \emph{\mbox{I-patch}}\cite{Varady:2001} (Figure \ref{fig:ipatch}) which is defined by $n$ \emph{ribbon} (tangential) surfaces ($R_i$) and $n$ \emph{bounding} surfaces ($B_i$). The patch equation is
\begin{equation}
    I = \displaystyle\sum_{i=1}^n \Big( w_i R_i \displaystyle\prod\limits_{\substack{j=1 \\ j\neq i}}^n B_j^2 \Big) + w_0 \displaystyle\prod_{j=1}^n B_j^2.
\end{equation}
\begin{figure}
    \centering
    \includegraphics[width=\linewidth]{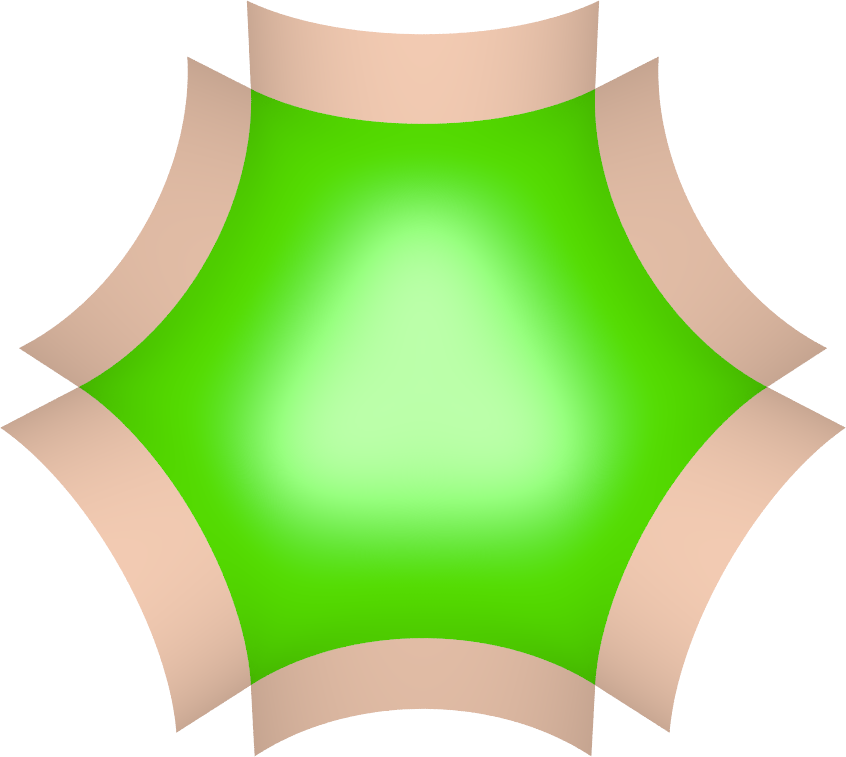}
    \caption{A six-sided I-patch in 3D}
    \label{fig:ipatch}
\end{figure}
The I-patch also has a different form, the \emph{faithful \mbox{I-patch}}\cite{Sipos:2022}:
\begin{equation}
    \hat{I} = \dfrac{I}{\displaystyle\sum_{i=1}^n w_i \prod\limits_{\substack{j=1 \\ j\neq i}}^n B_j^2}.
\end{equation}
It is advantageous for multiple reasons, as it does not have isolated points, and has a more even distance metric. However, the first goal is also achieved via using a  \emph{normalized I-patch}:
\begin{equation}
    I^* = \dfrac{I}{\displaystyle\sum_{i=1}^n \prod\limits_{\substack{j=1 \\ j\neq i}}^n B_j^2}.
\end{equation}

\section{Notes on Liming's method}

It is easy to see that the Liming curve is a one-sided 2D \mbox{I-patch}, with $R_1 = L_1 \cdot L_2$, $B_1 = C$ and $\frac{w_1}{w_0} = \frac{1-\lambda}{\lambda}$.

We will also utilize the following important property of Liming's method. Although intuitive, no proof was found in the literature, so we will include one.

\begin{lemma}
\label{lem:quad}
Let $Q$ be a quadratic implicit curve, $L_1, L_2$ two distinct tangent lines of it, $C$ the secant through the points of tangency. Then, $\exists \lambda \in (0;1), \omega \in \mathbb{R}$ s.t. 
\begin{equation}
(1-\lambda) \cdot L_1 \cdot L_2 - \lambda \cdot C^2 \equiv \omega \cdot Q.
\end{equation}
\end{lemma}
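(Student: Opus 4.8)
The plan is to reduce the identity to a statement about the pencil of conics sharing the tangency data of $Q$, and then read off $\lambda$ and $\omega$ from a short computation. All three objects $Q$, $L_1\cdot L_2$ and $C^2$ are quadratic polynomials in $x,y$, hence live in the six-dimensional space of quadratics; the claim is precisely that $Q$ lies in the plane $\mathrm{span}\{L_1 L_2,\,C^2\}$. The reason to expect this is the classical double-contact principle: every conic of the form $sQ+tC^2$ passes through the two points of tangency $P_1,P_2$ (where both $Q$ and $C$ vanish), and since $\nabla(sQ+tC^2)=s\,\nabla Q$ at a point where $C=0$, each member with $s\neq0$ has the \emph{same} tangent lines $L_1,L_2$ at $P_1,P_2$ as $Q$. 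The degenerate member of this pencil that splits into the two tangent lines is exactly $L_1 L_2$, which is therefore proportional to $sQ+tC^2$ for a suitable $[s:t]$ — and that proportionality, rearranged, is the desired relation.

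To make this explicit I would first normalize by an affine change of coordinates sending the secant $C$ to the $x$-axis, so that $C(x,y)=y$, and the two points of tangency to $P_1=(0,0)$ and $P_2=(1,0)$. Writing $Q=ax^2+bxy+cy^2+dx+ey+f$, the incidence conditions $Q(P_1)=Q(P_2)=0$ force $f=0$ and $d=-a$ (and $a\neq0$, else a tangent would coincide with the secant), while the gradients at $P_1,P_2$ give $L_1\propto -ax+ey$ and $L_2\propto ax+(b+e)y-a$. Expanding $L_1 L_2$ and comparing with $-a\,Q$ then shows the two agree in every monomial except $y^2$, so that
\[
L_1 L_2 + a\,Q = k\,C^2, \qquad k := ac+e(b+e).
\]
Rearranging $L_1 L_2 - kC^2 = -aQ$ into the form $(1-\lambda)L_1 L_2-\lambda C^2=\omega Q$ is a $2\times2$ linear solve, yielding $1-\lambda=\tfrac1{1+k}$, $\lambda=\tfrac{k}{1+k}$ and $\omega=\tfrac{-a}{1+k}$, which establishes the identity unconditionally.

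The main obstacle is not the existence of this linear relation but the sharper claim $\lambda\in(0;1)$, equivalently $k>0$, which cannot follow from the algebra alone: rescaling $C$ multiplies $k$ by a positive factor but never changes its sign. I would settle the sign through the determinant of the $3\times3$ symmetric matrix $M$ associated to $Q$; a direct computation gives the clean identity $\det M=-\tfrac{a}{4}\,k$, so that $k>0\iff a\det M<0$. This last inequality is exactly the condition that the given data form a genuine Liming configuration — a real, nonsingular conic whose arc between $P_1$ and $P_2$ is convex, with $C$ a true chord and the tangents $L_1,L_2$ meeting off that arc — rather than, say, a chord crossing two branches of a hyperbola, where one checks $a\det M>0$ and indeed $\lambda\notin(0;1)$. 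The delicate points are therefore to state this convexity/reality hypothesis precisely as the standing assumption of the lemma, and to verify that the positive proportionality constants introduced when normalizing $L_1,L_2,C$ do not secretly flip the sign in $\det M=-\tfrac{a}{4}\,k$.
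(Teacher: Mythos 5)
Your route is genuinely different from the paper's. The paper's proof picks an auxiliary point $(x,y)$ on $Q$ off the tangent lines, chooses $\lambda$ so that the Liming curve interpolates that point, and then concludes proportionality from a dimension count: both conics solve the homogeneous $5\times 6$ linear system of Appendix A, whose solution space is claimed to be one-dimensional. You instead normalize coordinates and exhibit the linear relation by direct expansion; I checked your algebra, and both $L_1L_2 + aQ = kC^2$ with $k = ac + e(b+e)$ and the identity $\det M = -\frac{a}{4}k$ are correct. Your computation is in one respect stronger than the paper's argument: it produces $\lambda$ and $\omega$ explicitly and does not rest on the unproved assertion that the five constraint equations are linearly independent (the appendix only counts equations; it never verifies full rank).

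The gap is in your sign analysis, and it matters because it leads you to propose a hypothesis that is both unnecessary and harmful. You claim the sign of $k$ ``cannot follow from the algebra alone'' because rescaling never changes it --- but you only considered rescaling $C$. Replacing $L_1$ by $-L_1$, an equally valid implicit equation of the same tangent line, turns the relation into $(-L_1)L_2 + (-a)Q = (-k)C^2$ and flips the sign of $k$. Since $k = 0$ would force $Q \propto L_1L_2$, which is impossible for a nonsingular conic, one can always orient the tangent lines so that $k > 0$, whence $\lambda = k/(1+k) \in (0;1)$ and $\omega = -a/(1+k)$ unconditionally: the lemma needs a sign convention on $L_1, L_2$, not a convexity or reality assumption. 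Your proposed restriction would exclude exactly the configurations the paper relies on, namely tangents to a hyperbola at points on different branches (Theorem 1 is applied to reproduce hyperbolas in the examples). Concretely, for $Q = xy - 1$ with tangency points $(1,1)$ and $(-1,-1)$, take $L_1 = x+y-2$, $L_2 = x+y+2$, $C = y-x$; then $L_1L_2 - C^2 = 4Q$, so $\lambda = 1/2$, $\omega = 2$ works even though the chord crosses both branches. Note also that your claim of establishing the bare identity ``unconditionally'' fails when $k = -1$ (division by $1+k$): e.g.\ for $Q = x^2 - y^2 - x$ with gradient-derived tangents $L_1 = -x$, $L_2 = x-1$ at $(0,0)$ and $(1,0)$ one gets $L_1L_2 + C^2 \equiv -Q$, and no pair $(\lambda,\omega)$ of the stated form exists at all; this pathology likewise vanishes after negating one line. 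To be fair, the paper's own proof is equally silent on orientation --- its $\lambda = L_1L_2/(L_1L_2+C^2)$ lies in $(0;1)$ only if $L_1L_2 > 0$ at the chosen point --- so your instinct that something must be said about signs is sound; the correct fix is an orientation convention on the line equations, not a restriction on the geometry of the configuration.
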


\begin{proof}
Let $(x,y)$ be a point s.t. $Q(x,y) = 0$, $L_1(x,y) \neq 0$, $L_2(x,y) \neq 0$. Let 
\begin{equation}
    \lambda := \dfrac{L_1(x,y)\cdot L_2(x,y)}{L_1(x,y)\cdot L_2(x,y) + C^2(x,y)}
\end{equation}
meaning that this point is on our curve as well.

A general quadratic implicit curve is of the form
\begin{equation}
    a x^2 + b x y + c y^2 + d x + e y + f,
\end{equation}
having 6 coefficients, although multiplying them with the same scalar does not change the curve.

Fitting a curve on the two tangential conditions, and the additional requirement of interpolating $(x,y)$ means a homogeneous linear system of 5 equations (see Appendix \ref{app:eq}). This means that the dimension of the solution space is one, so any two solutions are each other multiplied by a scalar.

\end{proof}

\section{Curves defined by four tangents}

Let $L_i \, (i = 1..4)$ be four distinct tangent lines, $\mathbf{p}_i \, (i=1..4)$ the points of tangency.

\begin{theorem}
Let $C_1$ be a line through $\mathbf{p}_1$ and $\mathbf{p}_2$, $C_2$ through $\mathbf{p}_3$ and $\mathbf{p}_4$. Then, the family of two-sided I-patches \begin{equation} I_{\mathbf{w}} := w_1 \cdot L_1 \cdot L_2 \cdot C_2^2 + w_2 \cdot L_3 \cdot L_4 \cdot C_1^2 + w_0 \cdot C_1^2 \cdot C_2^2 \end{equation} fulfills the tangential conditions. Moreover, if there exists a quadratic curve satisfying the constrained tangents, the normalized I-patch $I^*$ reproduces it with well-chosen coefficients.

\end{theorem}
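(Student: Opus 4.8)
The statement splits into two claims that I would treat separately. For the \emph{tangential conditions}, the plan is a pointwise computation at each $\mathbf{p}_i$. At $\mathbf{p}_1$ exactly two of the basic factors vanish: $L_1(\mathbf{p}_1)=0$ because $\mathbf{p}_1$ is the point of tangency of $L_1$, and $C_1(\mathbf{p}_1)=0$ because $C_1$ passes through $\mathbf{p}_1$. Since each of the three summands of $I_{\mathbf{w}}$ carries a factor of either $L_1$ or $C_1^2$, we immediately get $I_{\mathbf{w}}(\mathbf{p}_1)=0$, so the point lies on the curve. To check tangency I would differentiate via the product rule and evaluate at $\mathbf{p}_1$: in the summands $w_2 L_3 L_4 C_1^2$ and $w_0 C_1^2 C_2^2$ every term of the gradient keeps a factor $C_1$ and hence vanishes, while in $w_1 L_1 L_2 C_2^2$ every term except the one hitting $L_1$ keeps the factor $L_1$. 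This leaves $\nabla I_{\mathbf{w}}(\mathbf{p}_1) = w_1 L_2(\mathbf{p}_1) C_2^2(\mathbf{p}_1)\,\nabla L_1$, a scalar multiple of $\nabla L_1$, so the zero set of $I_{\mathbf{w}}$ is tangent to $L_1$ at $\mathbf{p}_1$ (under the generic nondegeneracy $w_1 L_2(\mathbf{p}_1) C_2(\mathbf{p}_1)\neq 0$). The three remaining points follow by the identical argument with the indices permuted.

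For the \emph{reproduction} claim I would invoke Lemma~\ref{lem:quad} twice. Applying it to the tangent pair $L_1,L_2$ with secant $C_1$ (valid because $Q$ is tangent to both at $\mathbf{p}_1,\mathbf{p}_2$ and $C_1$ is the secant through them) yields $\lambda_1\in(0;1)$ and $\omega_1$ with $(1-\lambda_1)L_1 L_2 - \lambda_1 C_1^2 = \omega_1 Q$, and likewise for $L_3,L_4$ with $C_2$ we get $\lambda_2,\omega_2$ with $(1-\lambda_2)L_3 L_4 - \lambda_2 C_2^2 = \omega_2 Q$. Solving each for the product of tangents and substituting into $I$ turns it into
\begin{equation}
I = \Big(\tfrac{w_1\omega_1}{1-\lambda_1}\,C_2^2 + \tfrac{w_2\omega_2}{1-\lambda_2}\,C_1^2\Big)\,Q + \Big(\tfrac{w_1\lambda_1}{1-\lambda_1} + \tfrac{w_2\lambda_2}{1-\lambda_2} + w_0\Big)\,C_1^2 C_2^2 .
\end{equation}

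Now I would fix the weights to collapse this to a multiple of $Q$. Choosing $w_0 = -\big(\tfrac{w_1\lambda_1}{1-\lambda_1}+\tfrac{w_2\lambda_2}{1-\lambda_2}\big)$ kills the pure $C_1^2 C_2^2$ term, and choosing the ratio $w_1:w_2$ so that $\tfrac{w_1\omega_1}{1-\lambda_1}=\tfrac{w_2\omega_2}{1-\lambda_2}=:\kappa$ makes the coefficient of $Q$ equal to $\kappa(C_1^2+C_2^2)$. Since for $n=2$ the normalization denominator is exactly $\sum_{i}\prod_{j\neq i}B_j^2 = C_1^2+C_2^2$, dividing gives $I^* = \kappa Q$, i.e. $I^*$ reproduces $Q$ up to the nonzero scalar $\kappa$.

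The conceptual heart, and the step I expect to be the real obstacle, is recognizing that the coefficient multiplying $Q$ after substitution, $\tfrac{w_1\omega_1}{1-\lambda_1}C_2^2 + \tfrac{w_2\omega_2}{1-\lambda_2}C_1^2$, must be made proportional to the normalization denominator $C_1^2+C_2^2$; this is precisely what forces the weight ratio and explains why the plain (unnormalized) patch $I$ would instead equal $Q$ times that extra quadratic factor, carrying spurious branches. I would also confirm that the two degrees of freedom among $(w_0,w_1,w_2)$ suffice (they do: one ratio fixes $\kappa$ and $w_0$ absorbs the leftover), and note the genericity caveats ($\omega_1,\omega_2\neq 0$ and the $\lambda_i$ in range) inherited from Lemma~\ref{lem:quad}.
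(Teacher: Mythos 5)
Your proof is correct and follows essentially the same route as the paper: the same pointwise gradient computation for tangency at each $\mathbf{p}_i$, and a double application of Lemma~\ref{lem:quad} to the pairs $(L_1,L_2;C_1)$ and $(L_3,L_4;C_2)$, arranging the weights so that $I_{\mathbf{w}} \equiv Q\cdot(C_1^2+C_2^2)$ and hence $I^*\equiv Q$ after dividing by the normalization $C_1^2+C_2^2$. The only differences are cosmetic: you start from general weights and solve for the conditions they must satisfy (staying with the lemma's sign convention, which makes your $w_0$ carry a minus sign), whereas the paper directly forms the combination $S_{1,\omega_1,\lambda_1}C_2^2 + S_{2,\omega_2,\lambda_2}C_1^2$ and reads the weights off its expansion.
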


\begin{proof}
The first statement trivially follows from the I-patch properties: for example in $\mathbf{p}_1$, using that $L_1(\mathbf{p}_1) = 0$ and $C_1(\mathbf{p}_1) = 0$; 
\begin{equation}
    I_{\mathbf{w}}(\mathbf{p}_1) = 0+0+0 = 0
\end{equation} and 
\begin{multline}
    \nabla I_{\mathbf{w}}(\mathbf{p}_1) = c_1(\mathbf{p}_1) \cdot \nabla L_1(\mathbf{p}_1) + \mathbf{v}_1(\mathbf{p}_1) \cdot L_1(\mathbf{p}_1) +\\
    + c_2(\mathbf{p}_1) \cdot 2 \cdot C_1(\mathbf{p}_1) \cdot \nabla C_1(\mathbf{p}_1) + \mathbf{v}_2(\mathbf{p}_1) \cdot C_1^2(\mathbf{p}_1),
\end{multline}
where $c_1, c_2$ are real-valued, $\mathbf{v}_1, \mathbf{v}_2$ are vector-valued functions. Using the constraints, we get
\begin{equation}
    \nabla I_{\mathbf{w}}(\mathbf{p}_1) = c_1(\mathbf{p}_1) \cdot \nabla L_1(\mathbf{p}_1) + 0 + 0 + 0,
\end{equation}
meaning that the point is on the curve, and the gradient's direction is the same as the line's gradient's, so $L_1$ is indeed a tangential line. $\square$

For the second statement, let $Q$ be the target quadratic curve,
\begin{align}
    R_1 &:= L_1 L_2\\
    R_2 &:= L_3 L_4
\end{align}
\begin{equation}
    S_{i,\omega,\lambda} := \omega ((1-\lambda) R_i + \lambda C_i^2)
\end{equation}

Due to Lemma \ref{lem:quad}, $\exists \omega_1, \lambda_1 : S_{1,\omega_1,\lambda_1} \equiv Q$, similarly $\exists \omega_2, \lambda_2 : S_{2,\omega_2,\lambda_2} \equiv Q$

Now, on one hand
\begin{multline}
    S_{1,\omega_1,\lambda_1} C_2^2 + S_{2,\omega_2,\lambda_2} C_1^2 =\\
    = \omega_1 (1-\lambda_1) R_1 C_2^2 + \omega_1 \lambda_1 C_1^2 C_2^2 +\\
    + \omega_2 (1-\lambda_2) R_2 C_1^2 + \omega_2 \lambda_2 C_1^2 C_2^2
\end{multline}

On the other hand
\begin{equation}
    S_{1,\omega_1,\lambda_1} C_2^2 + S_{2,\omega_2,\lambda_2} C_1^2 = Q \cdot (C_1^2 + C_2^2)
\end{equation}

This means that with
\begin{align}
    w_1 &= \omega_1 (1-\lambda_1)\\
    w_2 &= \omega_2 (1-\lambda_2)\\
    w_0 &= \omega_1 \lambda_1 + \omega_2 \lambda_2;
\end{align}

\begin{equation}
    I_{[w_1,w_2,w_0]} \equiv Q \cdot (C_1^2 + C_2^2), 
\end{equation}

so $I^*_{\mathbf{w}} \equiv Q$
\end{proof}

\section{Examples}

Four-tangent curves of course depend on the $w_i$ weights, so for a tangent configuration, many different curves can be obtained.

In the figures, blue lines represent the tangent lines, red lines are the cutting lines and the resulting curve itself is shown in purple.

\begin{figure}
    \centering
    \includegraphics[width=\linewidth]{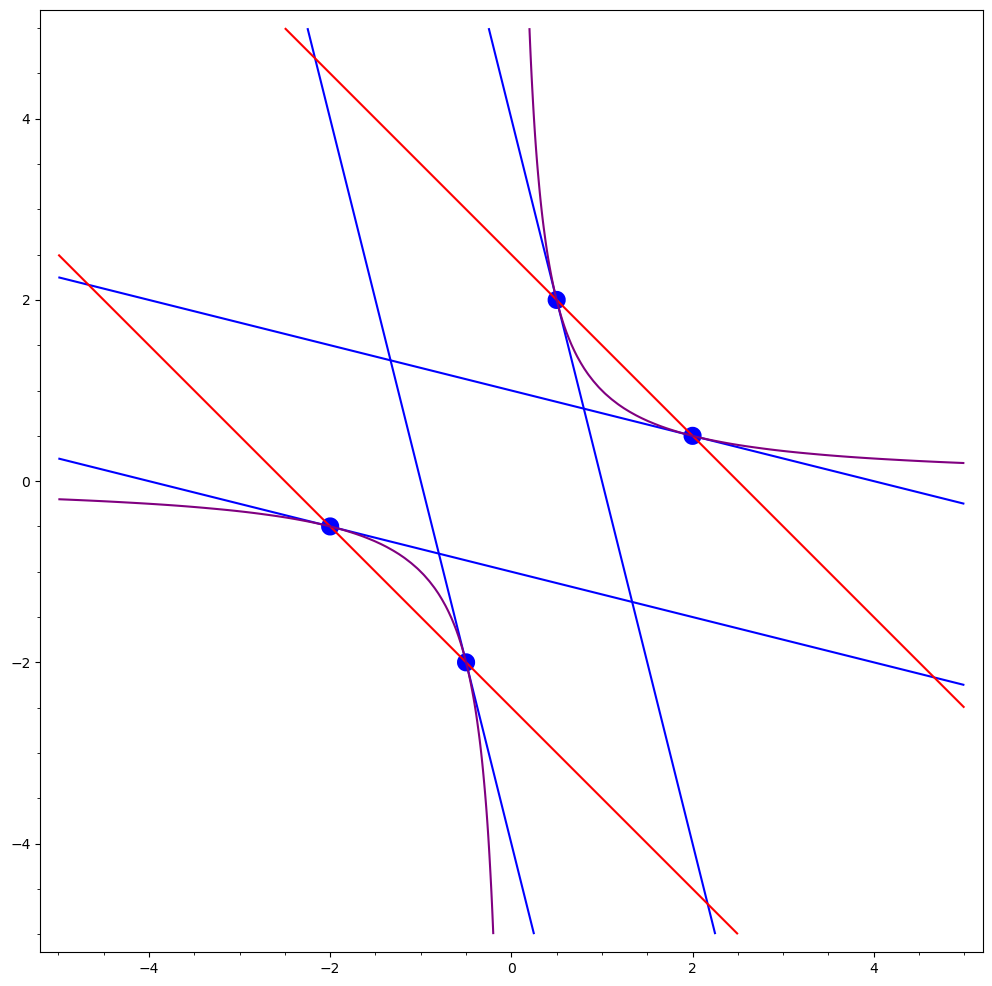}
    \caption{Reproducing a hyperbole}
    \label{fig:hyperbole}
\end{figure}

\begin{figure}
    \centering
    \includegraphics[width=\linewidth]{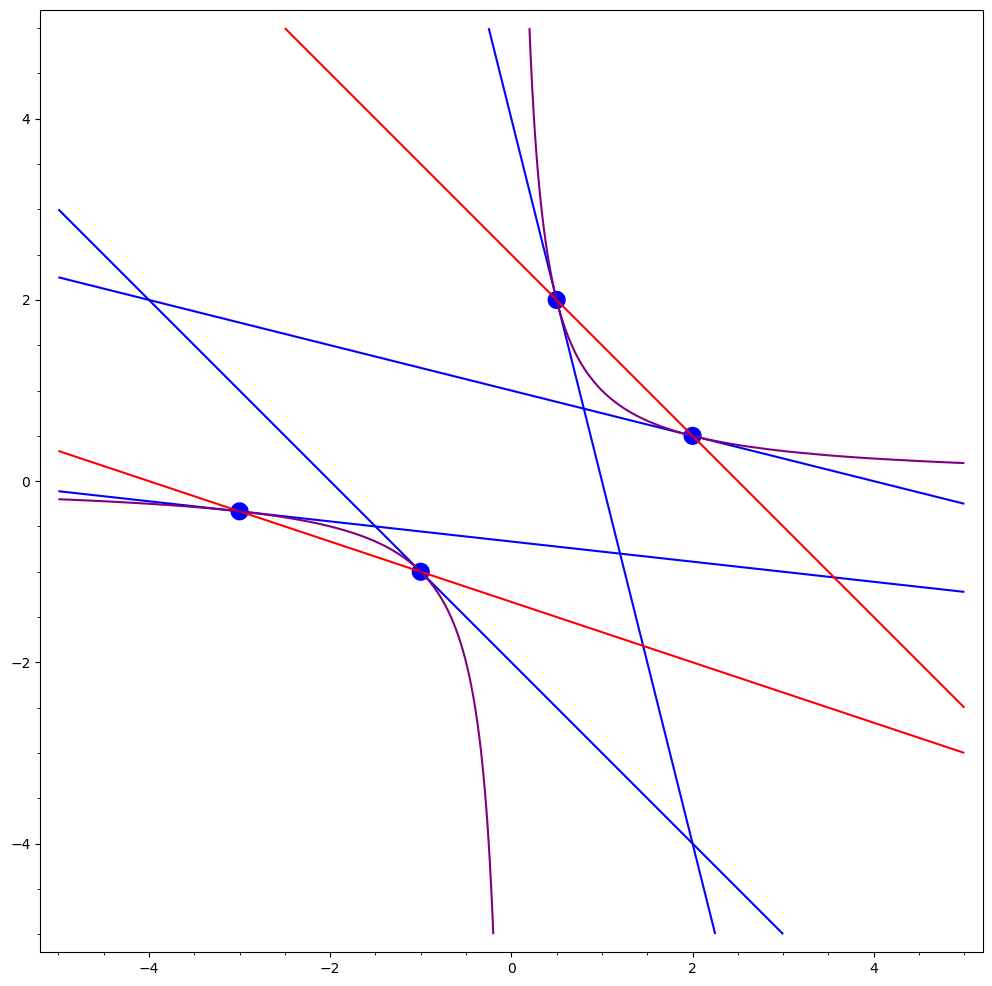}
    \caption{Reproducing a hyperbole with different tangents}
    \label{fig:hyperbole2}
\end{figure}

In Figures \ref{fig:hyperbole} and \ref{fig:hyperbole2} you can see the reproducibility of a hyperbole by choosing four points and four tangents of it, and setting in the first case $w_1=w_2=4/9$, $w_0=32/81$. In the second example, the correct weights are $w_1 = 3/4, w_2 = 4/9$ and $w_0 = 985/1296$.

\begin{figure}
     \centering
     \begin{subfigure}[b]{0.48\linewidth}
         \centering
         \includegraphics[width=\linewidth]{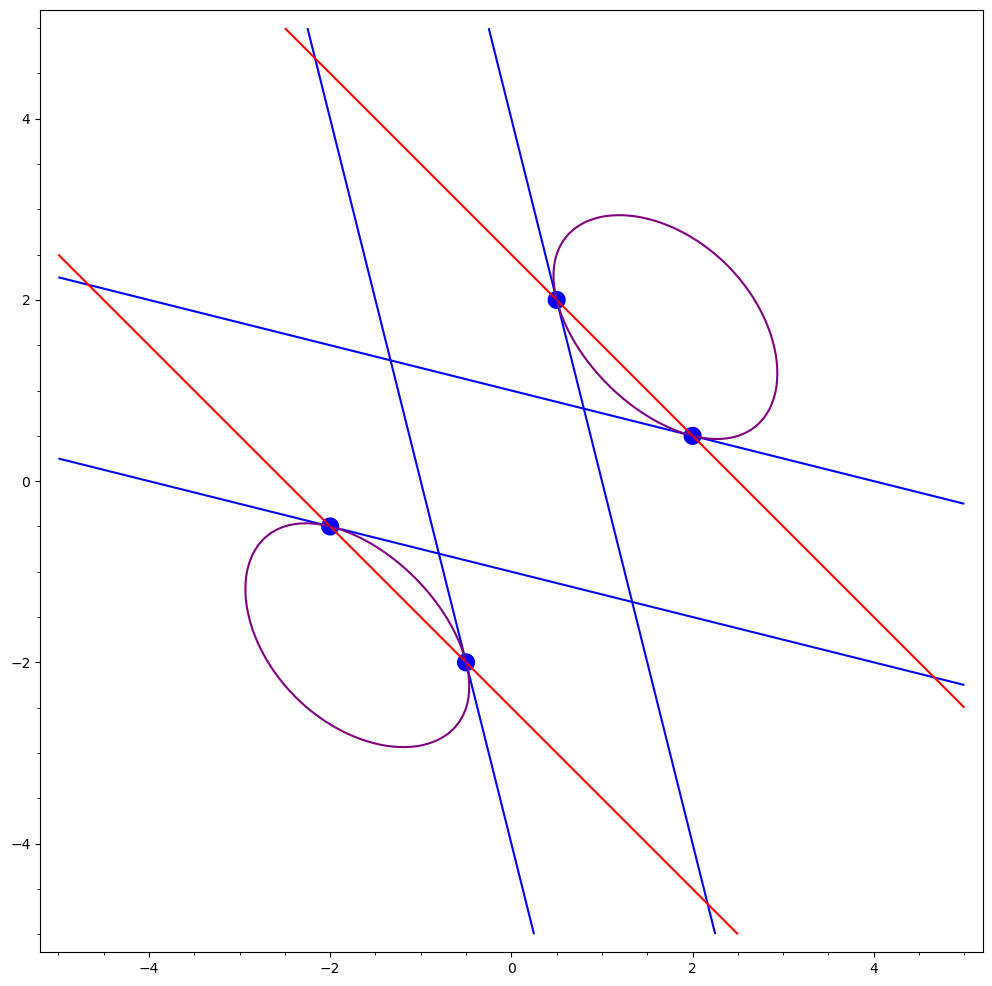}
         \caption{$w_1=0.5, w_2=0.5$}
     \end{subfigure}
     \hfill
     \begin{subfigure}[b]{0.48\linewidth}
         \centering
         \includegraphics[width=\linewidth]{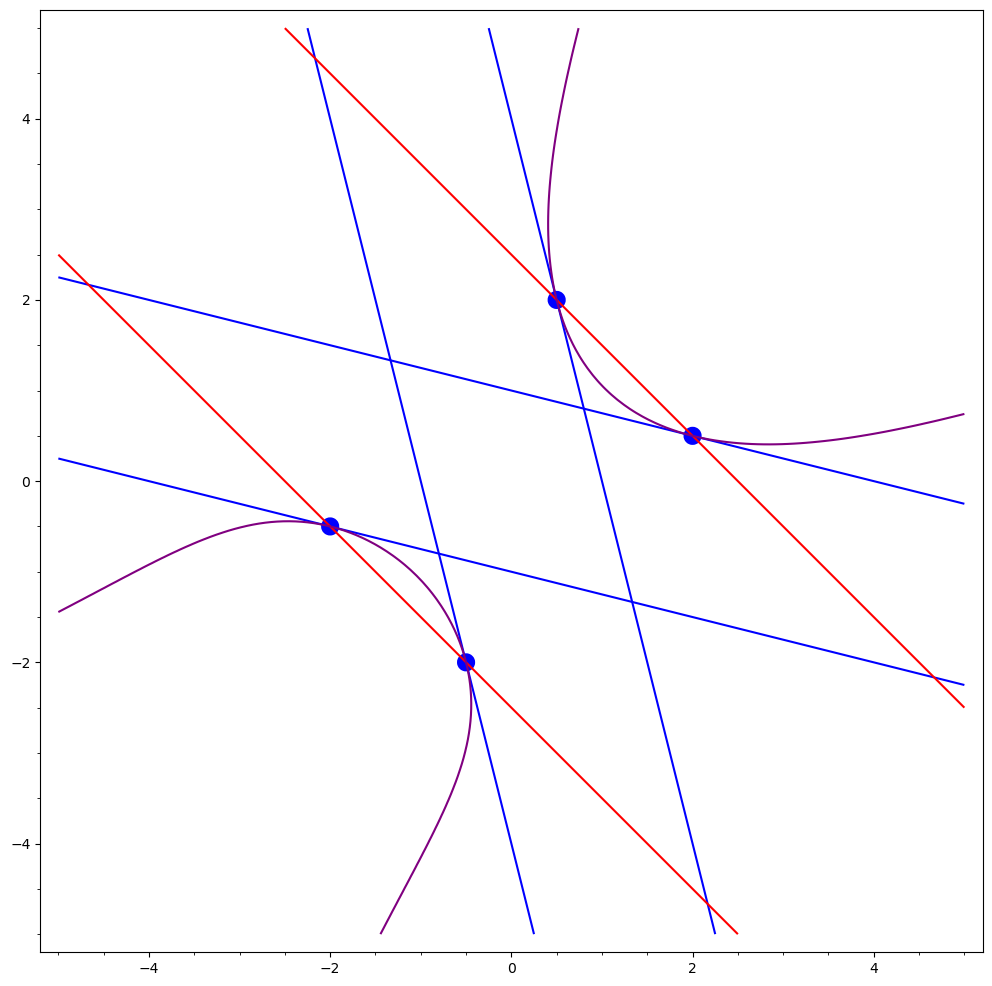}
         \caption{$w_1=0.5,w_2=1.0$}
     \end{subfigure}
     \hfill
     \begin{subfigure}[b]{0.48\linewidth}
         \centering
         \includegraphics[width=\linewidth]{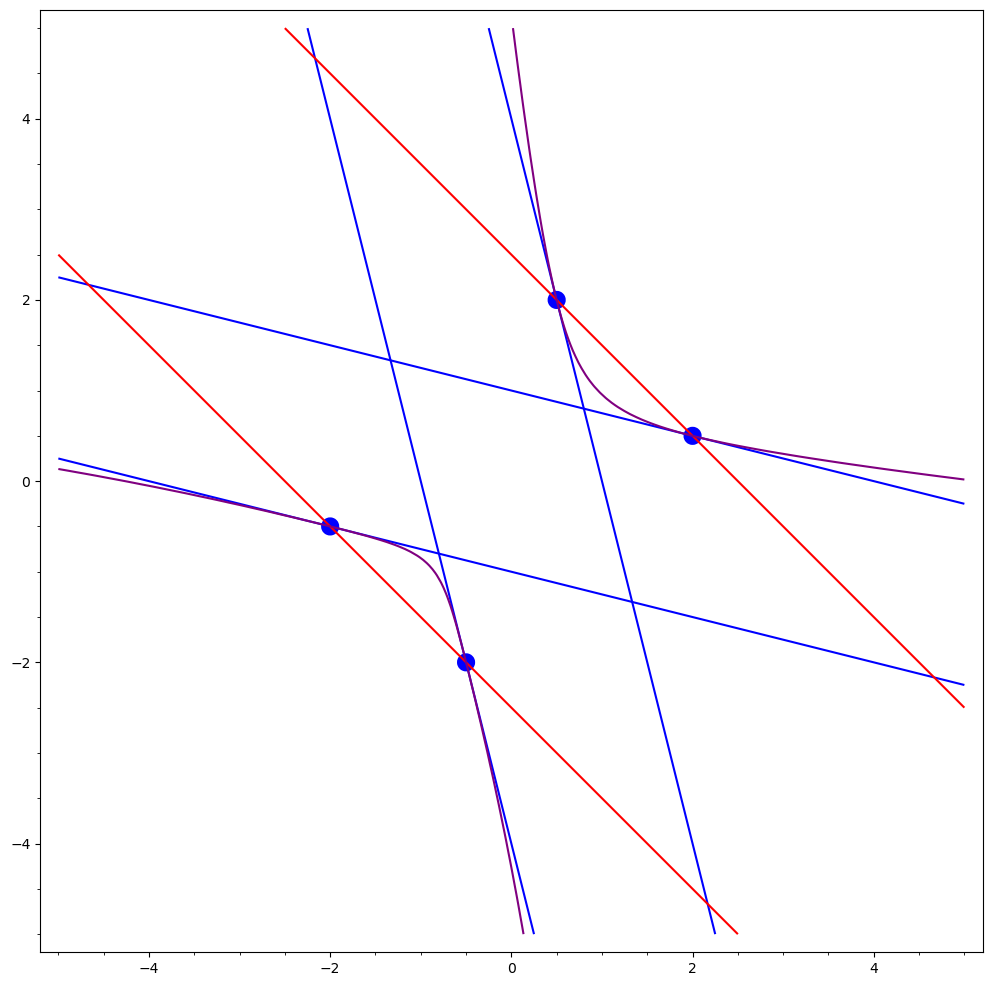}
         \caption{$w_1=0.8,w_2=2.0$}
     \end{subfigure}
     \hfill
     \begin{subfigure}[b]{0.48\linewidth}
         \centering
         \includegraphics[width=\linewidth]{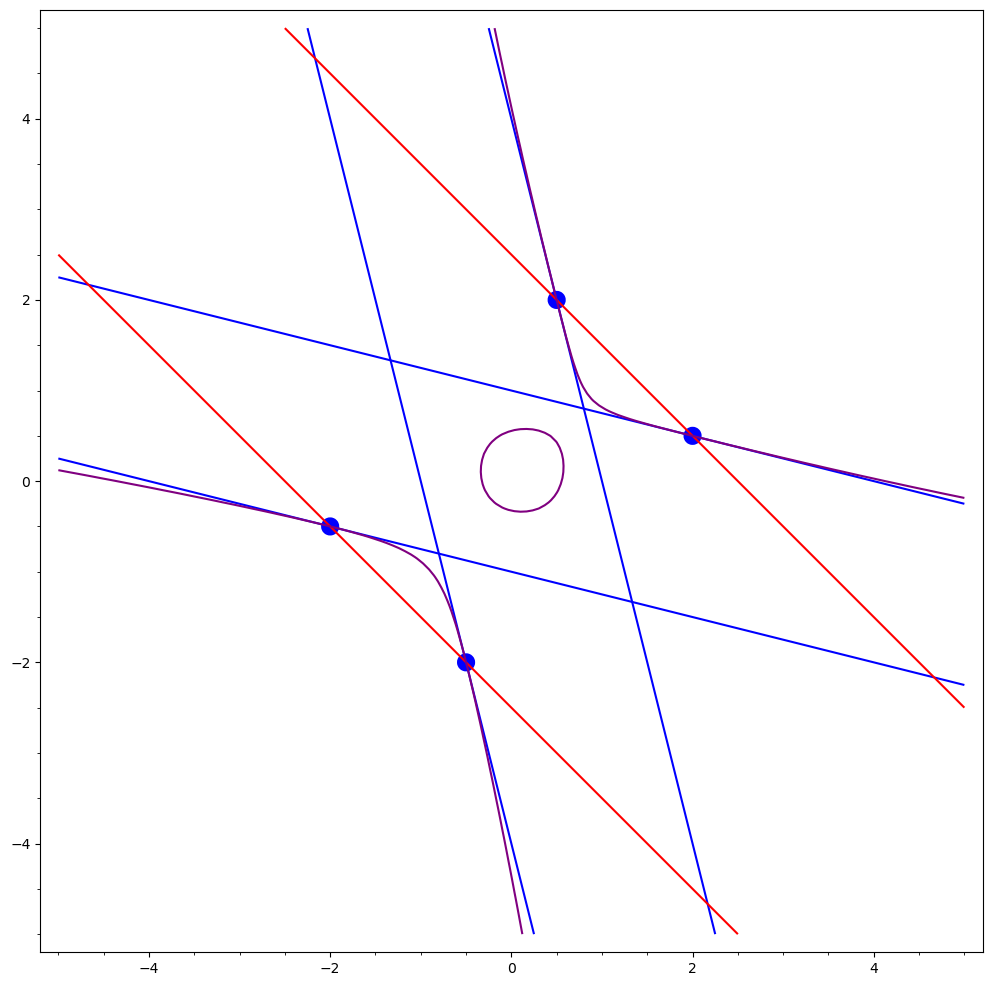}
         \caption{$w_1=2.0,w_2=1.5$}
     \end{subfigure}
        \caption{Different coefficients on the same data as Figure \ref{fig:hyperbole}}
        \label{fig:alternativ_hyperbole}
\end{figure}

In Figure \ref{fig:alternativ_hyperbole} it can be seen that changing the weights can greatly change the shape of the curve.

\begin{figure}
     \centering
     \begin{subfigure}[b]{0.48\linewidth}
         \centering
         \includegraphics[width=\linewidth]{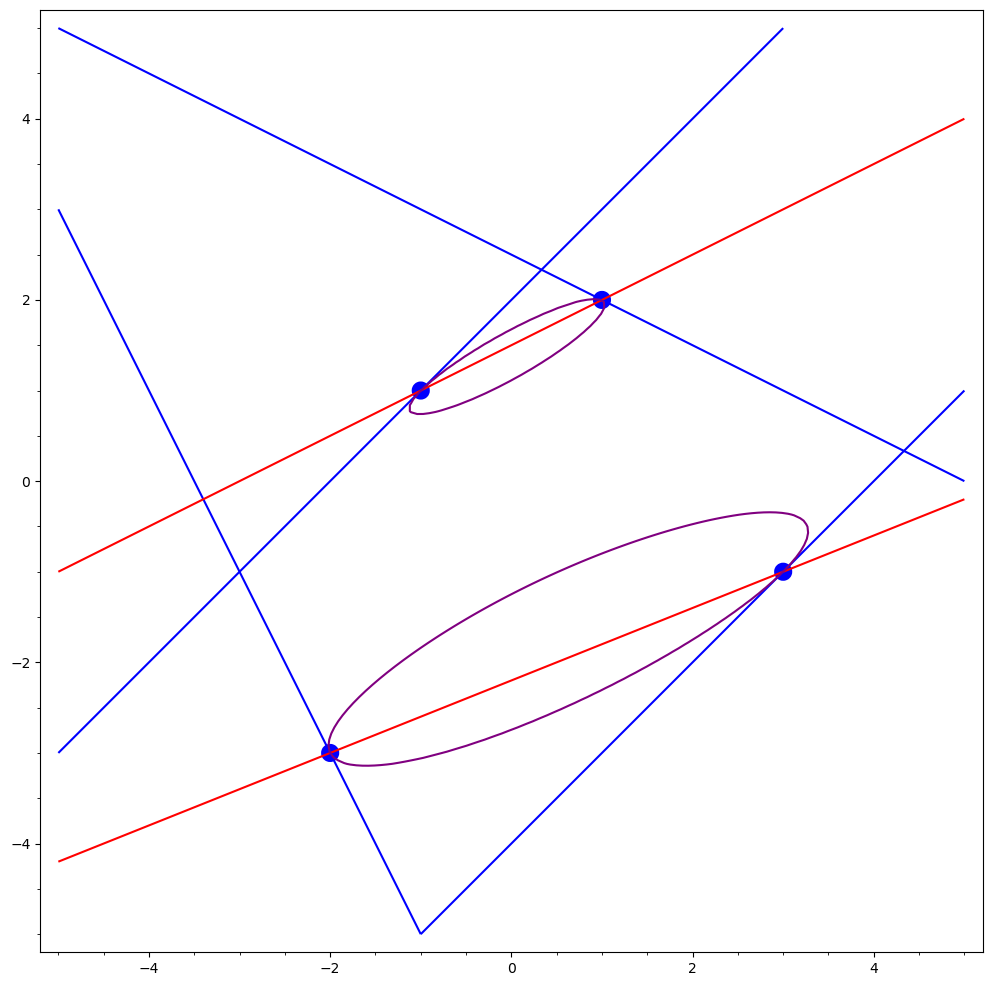}
         \caption{$w_1=0.2, w_2=1.2$}
     \end{subfigure}
     \hfill
     \begin{subfigure}[b]{0.48\linewidth}
         \centering
         \includegraphics[width=\linewidth]{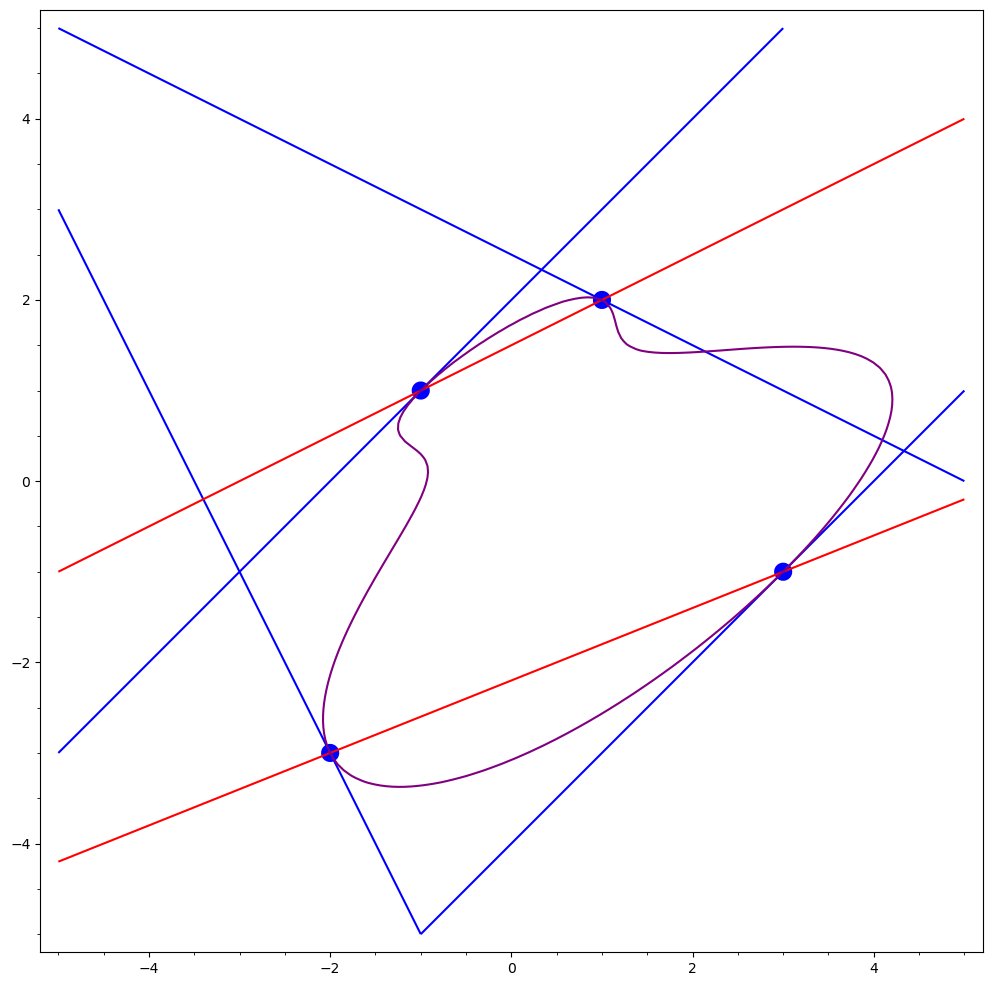}
         \caption{$w_1=0.2,w_2=4.6$}
     \end{subfigure}
     \hfill
     \begin{subfigure}[b]{0.48\linewidth}
         \centering
         \includegraphics[width=\linewidth]{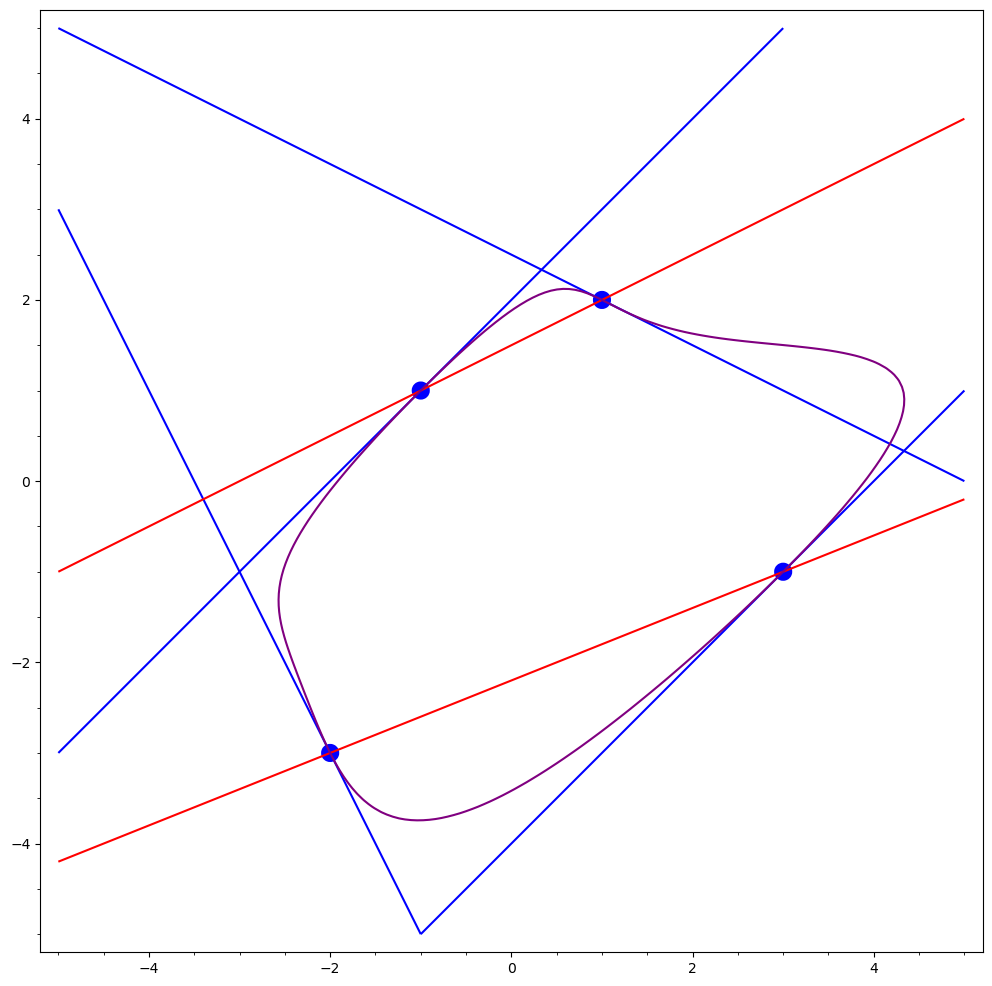}
         \caption{$w_1=1.0,w_2=6.0$}
     \end{subfigure}
     \hfill
     \begin{subfigure}[b]{0.48\linewidth}
         \centering
         \includegraphics[width=\linewidth]{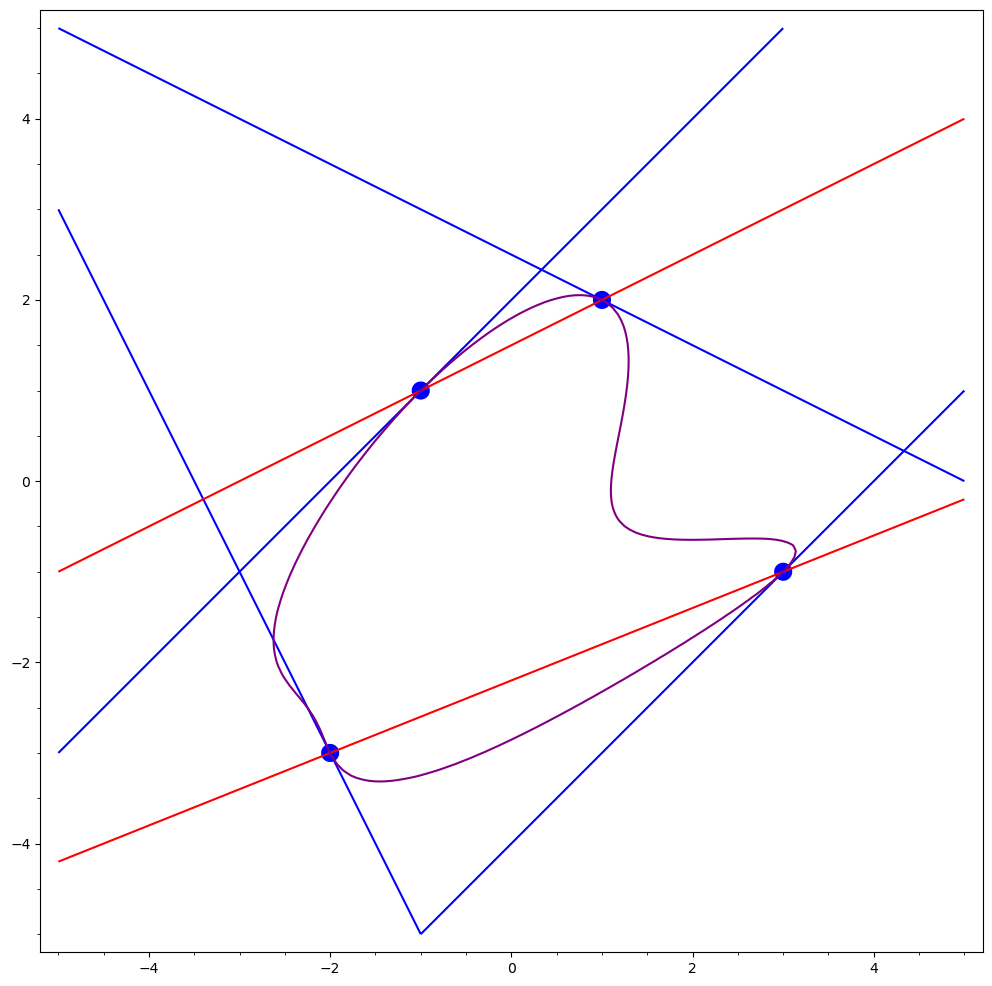}
         \caption{$w_1=1.2,w_2=0.4$}
     \end{subfigure}
     \hfill
     \begin{subfigure}[b]{0.48\linewidth}
         \centering
         \includegraphics[width=\linewidth]{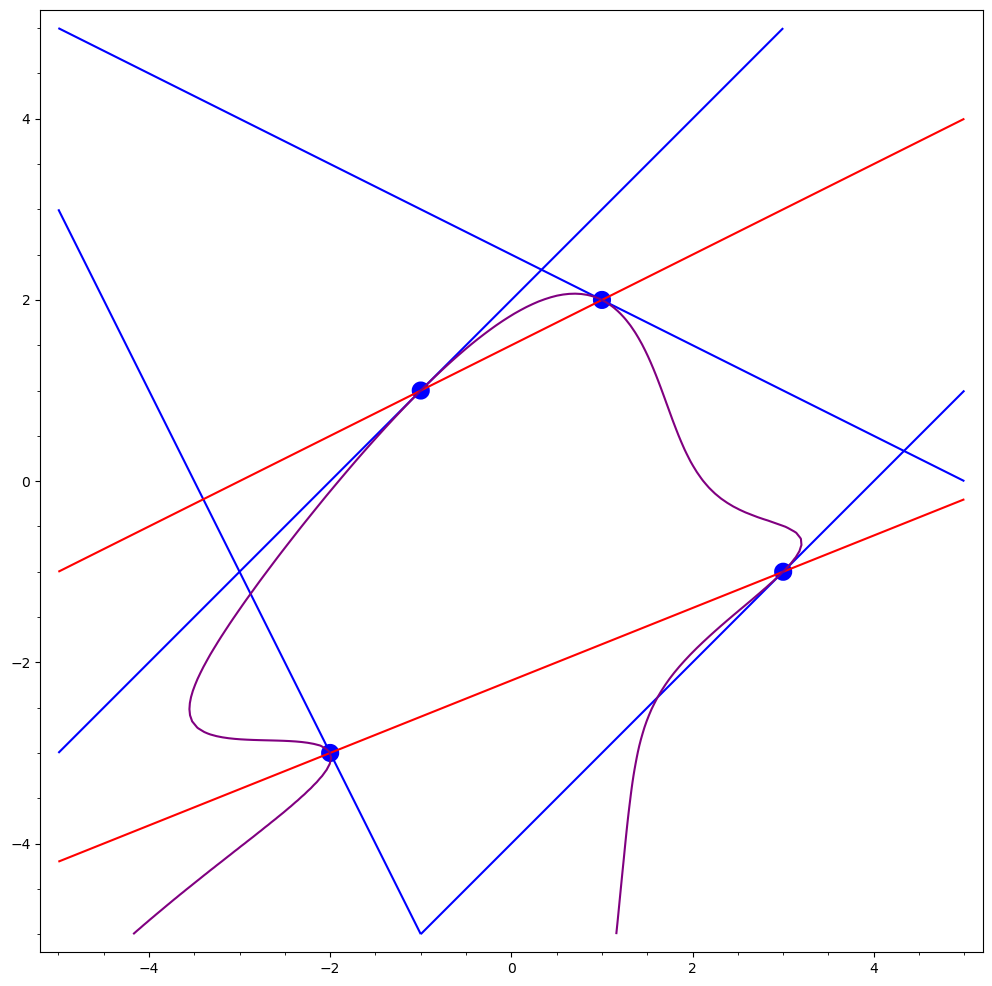}
         \caption{$w_1=1.8,w_2=0.4$}
     \end{subfigure}
     \hfill
     \begin{subfigure}[b]{0.48\linewidth}
         \centering
         \includegraphics[width=\linewidth]{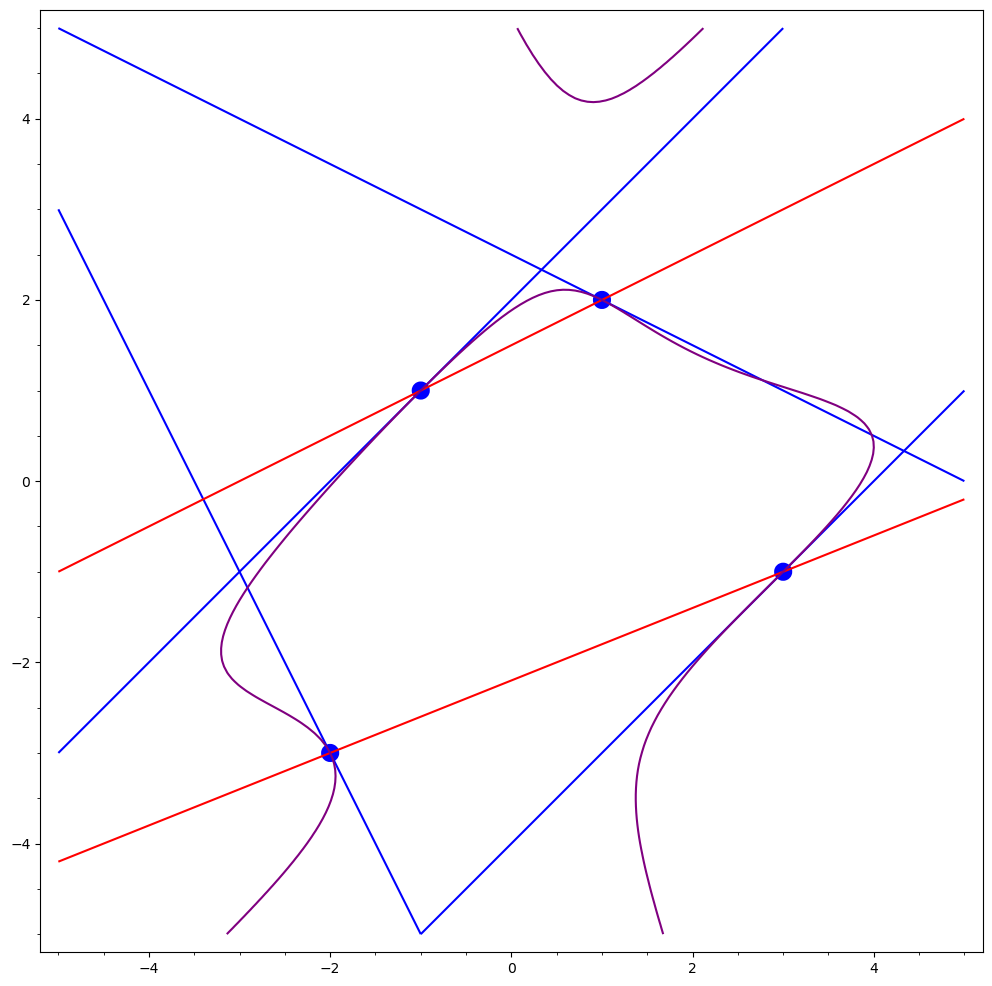}
         \caption{$w_1=2.2,w_2=3.8$}
     \end{subfigure}
        \caption{Different coefficients on a general example}
        \label{fig:other_example}
\end{figure}

In Figure \ref{fig:other_example} you can see an example with general tangent lines and six possible settings of the coefficients.

\section*{Acknowledgments}

This work was supported by the Hungarian Scientific Research Fund (OTKA, No. 124727: Modeling general topology free-form surfaces in 3D). The author thanks Tamás Várady for consultations and valuable comments.

\appendix

\section{Equation system of constraints}
\label{app:eq}

According to Bajaj and Ihm\cite{Bajaj:1992:Hermite} a tangential constraint in 2D can be described with the following equations ($(x,y)$ is the point to interpolate, $(dx,dy)$ is the prescribed gradient).

\begin{align}
    f(x,y) &= 0\\
    dx \cdot \partial_y f(x,y) - dy \cdot \partial_x f(x,y) &= 0
\end{align}

Now, apply this to the unknown quadratic curve $f(x,y) = a \cdot x^2 + b \cdot x y + c \cdot y^2 + d \cdot x + e \cdot y + f$, which should interpolate point $(x_1,y_1)$ with gradient $(m_1, n_1)$, point $(x_2,y_2)$ with gradient $(m_2,n_2)$ and point $(x_3,y_3)$ (with no specified gradient).

The partial derivatives are
\begin{align}
    \partial_x f(x,y) &= 2a \cdot x+b \cdot y+d\\
    \partial_y f(x,y) &= 2c \cdot y+b \cdot x+e
\end{align}

Thus, the equation system will be
\begin{align}
    a x_1^2 + b x_1 y_1 + c y_1^2 + d x_1 + e y_1 + f &= 0\\
    a x_2^2 + b x_2 y_2 + c y_2^2 + d x_2 + e y_2 + f &= 0\\
    a x_3^2 + b x_3 y_3 + c y_3^2 + d x_3 + e y_3 + f &= 0\\
    m_1\cdot(2c y_1+b x_1+e) - n_1\cdot(2a x_1+b y_1+d) &= 0\\
    m_2\cdot(2c y_2+b x_2+e) - n_1\cdot(2a x_2+b y_2+d) &= 0
    %nx_1\cdot\partial_y f(x_1,y_1) - ny_1\cdot\partial_x f(x_1,y_1) &= 0\\
    %nx_2\cdot\partial_y f(x_2,y_2) - ny_2\cdot\partial_x f(x_2,y_2) &= 0
\end{align}

These are 5 equations for $(a,b,c,d,e,f)$, so the solution space is indeed $6-5=1$ dimensional.

\bibliographystyle{unsrt}
\bibliography{sample}

\end{document}